\documentclass[envcountsame,runningheads]{llncs}

\usepackage{graphicx} 
\usepackage{amsmath, amssymb}
\usepackage{booktabs}
\usepackage{xcolor}
\usepackage{tikz}
\usepackage{placeins}
\usepackage{float}
\usepackage{multirow}
\usetikzlibrary{trees,positioning,shapes,arrows.meta}
\usepackage[normalem]{ulem}
\usepackage[T1]{fontenc}
\usepackage{lmodern}
\usepackage{comment}
\usepackage{tcolorbox}
\usepackage{algorithm}
\usepackage[noend]{algpseudocode}

\spnewtheorem{thrm}{Theorem}{\bfseries}{\itshape}
\spnewtheorem{lemm}{Lemma}{\bfseries}{\itshape}
\spnewtheorem{defn}{Definition}{\bfseries}{\itshape}

\renewcommand{\qed}{\hfill$\square$}

\newcommand{\itbf}[1]{\textit{\textbf{#1}}}

\usepackage{tcolorbox}
\tcbuselibrary{skins}
\usepackage{tcolorbox}
\tcbuselibrary{skins, breakable}

\newtcolorbox{soutblock}{
  blanker,
  breakable,
  underlay={
    \draw[red, line width=1.5pt] (interior.north west) -- (interior.south east);
  }
}

\begin{document}
\title{Boyer–Moore Variants for Indeterminate String Matching and Experimental Evaluation}
\titlerunning{Boyer–Moore Variants for Indeterminate Strings}
\author{Neerja Mhaskar\thanks{Corresponding author. Supported by the Natural Sciences \& Engineering Research Council of Canada [Grant Number RGPIN-2024-06915].}\orcidID{0000-0002-3233-6540} \and Nivetha Raj Pappuraj}
\institute{Department of Computing and Software, McMaster University,\\
  1280 Main Street West, Hamilton, Ontario, L8S 4L8, Canada\\
  \email{\{poplin, pappuran\}@mcmaster.ca}
}
\maketitle

\begin{abstract}
We study exact pattern matching on \emph{indeterminate strings}, where a text or pattern position may represent a set of symbols rather than a single letter. Focusing on Boyer--Moore-style methods, we present new bad-character rules (BC Rules~I--IV) and a new good-suffix procedure, computed by \textsc{Fast\_\allowbreak GSR\_\allowbreak Indet\_\allowbreak Shift}, which avoids the per-alignment recomputation used in \textsc{BM\_Indet}~\cite{Neerja2024} by shifting with a single preprocessed position-indexed table. We conduct a systematic experimental evaluation of sixteen algorithms, including classical bad-character adaptations (e.g., Horspool, Sunday, and Zhu--Takaoka) and hybrids that combine these bad-character rules with \textsc{Fast\_\allowbreak GSR\_\allowbreak Indet\_\allowbreak Shift}. Across synthetic scaling experiments and a case study on the \textit{E.\,coli} K-12 MG1655 genome, the \textsc{Fast\_BM\_Indet} hybrids consistently outperform \textsc{BM\_Indet} and \textsc{KMP\_Indet}~\cite{Neerja2024}, in some settings by up to two orders of magnitude. We also find that Zhu--Takaoka is the strongest bad-character-only adaptation on small alphabets and genomic data, while the \textsc{Fast\_BM\_Indet} variant using BC Rule~I offers comparable performance, making it attractive for larger alphabets. We conclude with practical guidance on choosing among these variants for indeterminate-string applications.
\end{abstract}

\begin{keywords}
Indeterminate strings, Degenerate strings, Boyer--Moore,
Pattern matching.
\end{keywords}

\section{Introduction}
\label{sec:introduction}

A string is typically modeled as a sequence of letters over a finite alphabet $\Sigma$. Given a text $t$ and pattern $p$ over $\Sigma$, exact pattern matching asks for all positions where $p$ occurs in $t$, and has been studied extensively since the 1970s~\cite{BaezaYates1992,Boyer1977,Knuth1977,Navarro2002}, including the Knuth--Morris--Pratt (KMP)~\cite{Knuth1977} and Boyer--Moore (BM)~\cite{Boyer1977} algorithms. In practice, however, a position may be ambiguous; for instance, a DNA sequence over $\Sigma_{\text{DNA}} = \{A,C,G,T\}$ may allow either $A$ or $C$ at a site. Such uncertainty is captured by \textit{indeterminate} (or \textit{degenerate}) strings, where at least one position contains a subset of $\Sigma$ of size greater than one.

Indeterminate strings arise naturally in genomics: a consensus sequence from aligned reads is indeterminate where multiple bases occur at comparable frequency~\cite{Stormo2000}, and a SNP-annotated reference genome is indeterminate at sites where individuals carry different bases~\cite{IUPAC1985}. Efficient matching is therefore directly relevant to genomic sequence analysis, motivating our case study on real genomic data in Section~\ref{sec:experimental-results}.

Pattern matching on indeterminate strings dates to Fischer and Paterson~\cite{FischerPaterson1974} and Abrahamson~\cite{Abrahamson1987}, with later work on algorithmic properties~\cite{Alzamel2020,Crochemore2016,Daykin2019,HolubSmythWang2008}. Holub, Smyth, and Wang~\cite{HolubSmythWang2008} gave a family of algorithms under varying constraints; their most general achieves $O(mn)$ time by switching between a Boyer--Moore style method for solid strings and Shift-Add for indeterminate ones. Iliopoulos and Radoszewski~\cite{Iliopoulos2016} gave an  $\mathcal{O}(2^{\sigma} n \log m)$ time FFT-based algorithm, where $\sigma$ is the alphabet size, but Hasibi et al.~\cite{HMZ2026} removed this exponential dependence with an FFT-based method running in $O\bigl(n\log m + |Occ_{\diamondsuit}|\cdot \min(m, D_1 + D_2)\bigr)$ time, where $|Occ_{\diamondsuit}|$ is the number of occurrences reported after converting to partial-word matching and $D_1$ and $D_2$ are the numbers of indeterminate positions in the pattern and text.

Crochemore et al.~\cite{Crochemore2016} gave an $O(nk)$ suffix-tree algorithm for
\textit{conservative} degenerate matching, where text and pattern hold at most $k$
indeterminate positions, and Daykin et al.~\cite{Daykin2019} proposed DBWT, which
builds the text's Burrows--Wheeler transform in $O(n)$ time and reports occurrences
in $O(km^2+\text{occ})$ time. Gawrychowski, Ghazawi, and Landau~\cite{Gawrychowski2020}
studied equal-length indeterminate matching under Cartesian-tree, order-preserving,
and parameterized relations, giving an $O(n\log^2 n)$-time Cartesian-tree algorithm
and NP-hardness for the other two even when positions hold at most two characters;
the journal version~\cite{Gawrychowski2026} improved the Cartesian-tree bound to
$O(n\log n\log\log n)$ time and $O(n)$ space. Dehghani et al.~\cite{Neerja2024,KMPIndet}
introduced prefix-array-based KMP- and BM-style algorithms for indeterminate strings
and noted that systematically adapting the Boyer--Moore variant family remained
open the direction we pursue here.

In this paper, we first adapt the classical Boyer--Moore bad-character only variants to indeterminate strings. We then propose four new bad-character rules (BC~Rules~I--IV) together with a new good-suffix rule, computed by the \textsc{Fast\_\allowbreak GSR\_\allowbreak Indet\_\allowbreak Shift} algorithm. Combining each bad-character rule with this good-suffix rule yields four new algorithm variants, which we call the \textsc{Fast\_\allowbreak BM\_\allowbreak Indet} variants. We show that, among the bad-character-only
variants for indeterminate strings including Holub's algorithm~\cite{HolubSmythWang2008}, the
Zhu--Takaoka adaptation performs best, whereas among the four
\textsc{Fast\_BM\_Indet} variants, the one pairing BC~Rule~I with the
\textsc{Fast\_\allowbreak GSR\_\allowbreak Indet\_\allowbreak Shift} good-suffix rule performs best. For
completeness, we also show that all four \textsc{Fast\_\allowbreak BM\_\allowbreak Indet} variants outperform the existing\allowbreak\ \textsc{KMP\_\allowbreak Indet} and \textsc{BM\_\allowbreak Indet} algorithms, and perform comparably to the brute-force algorithm.

The remainder of the paper is organized as follows.
Section~\ref{sec:preliminaries} gives definitions and notation, and
Section~\ref{sec:bm} surveys the classical Boyer--Moore algorithm and the variant families considered here. Section~\ref{sec:main-results} presents our indeterminate adaptations of these classical variants and then introduces four new Boyer--Moore style algorithms for indeterminate strings. Finally,  Section~\ref{sec:experimental-results} presents the experimental evaluation of sixteen algorithms.


\section{Preliminaries}
\label{sec:preliminaries}
Let $\Sigma$ be a finite ordered alphabet of size $|\Sigma| =
\sigma$. A \textit{solid string} $x[1..n]$ over $\Sigma$ is a
sequence of $n$ letters each drawn from $\Sigma$, where $x[i]
\in \Sigma$ for all $1 \leq i \leq n$. The \textit{length} of
$x$ is $|x| = n$ and the \textit{empty string} is denoted
$\varepsilon$. For $1 \leq i \leq j \leq n$, $u = x[i..j]$ is
a \textit{substring} of $x$, empty if $j < i$. If $i = 1$,
$u$ is a \textit{prefix} of $x$; if $j = n$, $u$ is a
\textit{suffix} of $x$. The \textit{reverse} of $x[1..m]$ is
the solid string $x^R[1..m]$ defined by $x^R[i] = x[m-i+1]$
for $1 \leq i \leq m$. 
Let $\Sigma'$ denote the set of all non-empty subsets of
$\Sigma$. An element of $\Sigma'$ of cardinality one, is called a \textit{solid letter}, and of cardinality greater than one is called an \textit{indeterminate letter}. Abusing notation for ease of exposition,  we identify a solid letter $\{c\} \in \Sigma'$ with the corresponding character $c \in \Sigma$, and write $c$ for both. An
\textit{indeterminate string} (also called a \textit{degenerate
string}) $T[1..n]$ is a sequence of elements drawn from
$\Sigma'$, with at least one indeterminate letter. 
A position $i$ of an indeterminate string $x$ is called an \textit{indeterminate position}, if $|x[i]| > 1$, otherwise it is called a \textit{solid position}.
Two letters
$\lambda_1, \lambda_2 \in \Sigma'$ \textit{match}, written
$\lambda_1 \approx \lambda_2$, if and only if $\lambda_1 \cap
\lambda_2 \neq \emptyset$; otherwise they \textit{mismatch},
written $\lambda_1 \not\approx \lambda_2$. 

The \textit{prefix array} $\pi_x[1..n]$ of a
string $x[1..n]$ is defined as an integer array where for each position $i$,  $\pi_x[i]$ denotes the length of the longest prefix
of $x[1..n]$ that matches a prefix of the suffix $x[i..n]$. Formally, $\pi_x[i] = \max\{\, \ell \mid 0 \leq \ell \leq n-i+1
\text{ and } x[1..\ell] \approx x[i..i+\ell-1] \,\}$. The prefix array is computed in $O(n)$ time for solid
strings~\cite{HolubSmythWang2008} and in $O(n\sqrt{n})$ time for indeterminate
strings~\cite{Iliopoulos2016}.  
An \textit{alignment} at position $i$, $1 \leq i \leq n-m+1$,
refers to comparing the pattern against the text substring
$T[i..i+m-1]$, which we call the \textit{window}. A
\textit{shift} $s \geq 1$ advances an alignment at position $i$ to position $i+s$.

It is customary to encode indeterminate strings for their processing. In this paper we use the \textit{power-of-two encoding}~\cite{btt2013}:
each letter of the alphabet $\Sigma = \{a_1, a_2,\allowbreak \ldots,\allowbreak a_\sigma\}$ of size
$\sigma$ is mapped to a distinct power of two, $a_j \mapsto 2^{\,j-1}$, so that
$\Sigma$ is mapped to $\Sigma_{N} = \{2^0, 2^1, \ldots, 2^{\sigma-1}\}$. A non-empty
indeterminate letter $\lambda \subseteq \Sigma$ is represented as the sum of the
codes of its elements, $\sum_{a \in \lambda} a$, which, since the codes occupy
distinct bits, is exactly the bitwise OR of those codes. A character $s$ is
\emph{solid} if its value is a power of two (equivalently, exactly one bit is set),
and two characters $s_1, s_2$ match ($s_1 \approx s_2$) if and only if their bitwise
AND is non-zero ($s_1 \ \&\ s_2 \neq 0$), i.e.\ they share at least one letter of
$\Sigma$.

\textit{Throughout this paper, strings are assumed to be over integer alphabets. We use lowercase letters $t$ and $p$ to denote solid strings, and capital letters $T$ and $P$ to denote indeterminate strings.}

\section{Boyer--Moore Algorithm and its Variants}\label{sec:bm}

The Boyer--Moore algorithm~\cite{Boyer1977}, widely regarded as the
fastest practical algorithm for exact string matching on solid strings over large alphabets, determines the shift solely from the pattern using the bad character and good suffix rules by preprocessing the pattern to create lookup tables for each rule in $O(m)$-time. At each alignment $i$, $P[1..m]$ is compared against
the window $T[i..i+m-1]$ from right to left and a shift $s =
\max(\text{BC shift},\, \text{GS shift})$ is applied on a mismatch,
achieving $O(nm)$ worst-case and $O(n/m)$ best-case search time. 

The Boyer--Moore (BM) variants considered here fall into two families.
\textbf{BC-only} variants rely solely on the bad-character rule; they
differ in which text positions, and how many, they examine after a
mismatch when computing the shift. \textbf{BC+GS} variants keep both the
bad-character and good-suffix rules and shift by the larger of the two,
differing only in how the good-suffix rule is realized to guarantee linear worst-case time. The bad-character rule has several forms. In the simplest, the shift
depends on one mismatched character; some variants use a pair of
consecutive characters. The classical variants studied here, summarised
in Table~\ref{tab:taxonomy}, cover three lookup strategies:
single-character (Horspool, Sunday, Raita, Smith, and Tuned
Boyer--Moore), character-pair (Zhu--Takaoka and Berry--Ravindran), and
the full BC+GS combination of the original Boyer--Moore
algorithm~\cite{Boyer1977}. We extend these baselines to indeterminate
text in Section~\ref{sec:main-results}, which also introduces BC
Rules~I--IV and a new good-suffix algorithm, yielding four new BC+GS
variants. We exclude variants that replace a rule with a fundamentally different
mechanism~\cite{ReverseFactor1994,TurboReverseFactor1994,FastSearch2005},
fuse Boyer--Moore with unrelated techniques~\cite{WuManber1994}, or trade
simplicity for additional bookkeeping~\cite{Crochemore1992}.
\begin{lemm}
\label{lem:bm-complexity}
Boyer--Moore variants using only single-character bad-character
rules require $O(m + \sigma)$ preprocessing time, $O(\sigma)$ space,
and $O(nm)$ worst-case search time. The BC-Only character-pair
variants require $O(m + \sigma^2)$ preprocessing time, $O(\sigma^2)$
space, and $O(nm)$ worst-case search time. The classical
Boyer--Moore algorithm, combining both rules, also requires $O(nm)$
worst-case search time, with  $O(m + \sigma)$ preprocessing for the
bad-character table and $O(m)$ additional preprocessing for the
good-suffix tables.
\end{lemm}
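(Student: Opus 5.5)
The plan is to treat the three families separately. For each, bound the preprocessing cost by the size of the table being initialised plus one pass over the pattern. Then bound the search cost with a single generic argument: every alignment costs at most $m$ comparisons, and every shift is at least $1$.

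\textbf{Single-character BC-only variants} (Horspool, Sunday, Raita, Smith, Tuned Boyer--Moore). Each keeps one table indexed by $\Sigma$. Preprocessing first initialises all $\sigma$ entries to the default shift ($m$, or $m+1$ for Sunday). It then scans the pattern once, left to right, overwriting the entry of $p[j]$ with $m-j$ (or the variant-specific analogue). This costs $O(m+\sigma)$ time and $O(\sigma)$ space. Smith takes the maximum of two such lookups. Tuned BM adds a constant-size guard. Neither changes the bounds.

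\textbf{Character-pair variants} (Zhu--Takaoka, Berry--Ravindran). Here the table is indexed by $\Sigma\times\Sigma$. Initialising it costs $O(\sigma^2)$ time and space. A single scan over the $m-1$ consecutive pairs of $p$, together with the boundary entries involving $p[1]$ or $p[m]$, then fills in the smaller shifts in $O(m)$ time. This gives $O(m+\sigma^2)$ preprocessing and $O(\sigma^2)$ space.

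\textbf{Search time for all BC-only variants.} In every variant, each alignment compares at most $m$ pattern positions against the window, plus $O(1)$ extra lookups. Every table entry is at least $1$, so each shift advances the alignment by at least one position. There are therefore at most $n-m+1$ alignments. The search is thus bounded by $O((n-m+1)m)=O(nm)$.

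\textbf{Classical Boyer--Moore.} The bad-character table is built exactly as in the single-character case, in $O(m+\sigma)$ time. The good-suffix table is built in $O(m)$ time by the standard method of~\cite{Boyer1977}. One way is to compute the border (failure) function of $p^R$, or equivalently the suffix-length array of $p$, in linear time by the KMP-style amortised argument. Two linear passes then fill in the shifts: the first handles the case where the matched suffix reoccurs in $p$ preceded by a different character, and the second handles the case where only a prefix of $p$ matches a suffix of the matched part. The search bound follows from the same per-alignment argument as above, since the shift $\max(\text{BC},\text{GS})$ is still at least $1$.

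The only non-trivial step is the $O(m)$ good-suffix preprocessing. Its linearity rests on the amortised analysis of the border computation: the border pointer decreases at most as many times as it increases. I would cite that argument rather than re-derive it. Note also that the lemma states only the $O(nm)$ worst case. The sharper $O(n)$ bound for BM with the Galil rule is not needed, so no Cole-type analysis is required.
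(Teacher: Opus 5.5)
The paper gives no proof of this lemma: it is stated as a summary of the classical analyses in the cited works, and the preceding paragraph only sketches the intuition. Your argument is a correct rendering of exactly that standard reasoning: table initialisation plus one pass over the pattern for preprocessing, and at most $m$ comparisons per alignment with every applied shift at least $1$, giving $O((n-m+1)m)=O(nm)$ search.

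Two small imprecisions, neither of which affects any bound. First, the boundary entries of the pair tables (e.g.\ all $(a,p[1])$ and $(p[m],b)$ for Berry--Ravindran) cost $O(\sigma)$ rather than $O(m)$; the $O(\sigma^2)$ initialisation absorbs this. Second, in Hume--Sunday's Tuned BM the skip entry for $p[m]$ is deliberately $0$, because it flags a candidate window. So ``every table entry is at least $1$'' should be replaced by ``the shift taken after each verification is at least $1$''.
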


\section{Main Results}
\label{sec:main-results}
In this section, we first adapt the seven Boyer--Moore bad-character-only variants of
Table~\ref{tab:taxonomy} for indeterminate stings, then we present four new BC+GS variants for indeterminate strings.

\subsection{Bad Character Rule for Indeterminate Strings}\label{ssec:indetbc}

The table $\tau$ of Equation~\ref{eq:bm-bc-tau} given in Section~\ref{Appendix} can be used to compute the bad character rule shift for indeterminate strings as follows. Suppose the right-to-left scan fails at position $j$, against the text letter $T[i+j-1]$, which serves as the lookup character in the Boyer--Moore algorithm. If this lookup character is solid, the shift is computed as $\max\!\bigl(1,\, j-\tau[T[i+j-1]]\bigr)$, as described in Section~\ref{sec:bm}. However, if it is indeterminate, say $\lambda$, the shift is computed as,
\begin{equation}
\label{eq:bc-indet}
\textsc{Indet-BCS(BM)}=\max\!\bigl(1,\,j - \max\{\, \tau(c) : c \in \lambda \,\}\bigr).
\end{equation}

Observe that, the five Bad Character only variants (Horspool, Sunday, Raita, Smith, and Tuned BM) compute the shift based on different lookup characters. We use these lookup characters with the table $\tau$ along with the strategy used for solid and indeterminate letters to compute the shift as follows. 

\begin{equation}
\textsc{Indet-BCS(Horspool)}=\max\bigl(1,\,m-\max\{\tau(c):c\in T[i+m-1]\}\bigr)
\end{equation}
\begin{equation}
\textsc{Indet-BCS(Sunday)} =
\begin{cases}
& \max\bigl(1,\,(m+1)-\max\{\tau(c):c\in T[i+m]\}\bigr), \\
& 1, \quad \text{when } i+m>n.
\end{cases}
\end{equation}

\begin{equation}
\label{eq:indet-raita}
\textsc{Indet-BCS(Raita)} = \max
\begin{cases}
\max\Bigl(1,\; m - \max\limits_{c \in T[i+m-1]} \tau(c)\Bigr), \\[8pt]
\max\Bigl(1,\; m - \max\limits_{c \in T[i]} \tau(c)\Bigr), \\[8pt]
\max\Bigl(1,\; m - \max\limits_{c \in T[k]} \tau(c)\Bigr), \text{ where } k=i+\lfloor\frac{m}{2}\rfloor
\end{cases}
\end{equation}

\begin{equation}
\label{eq:indet-smith}
\begin{aligned}
\textsc{Indet-BCS(Smith)} = \max
\begin{cases}
\max\Bigl(1,\; m - \max\limits_{c \in T[i+m-1]} \tau(c)\Bigr), \\[10pt]
\max\Bigl(1,\; (m{+}1) - \max\limits_{c \in T[i+m]} \tau(c)\Bigr),
  \quad i+m \le n.
\end{cases}
\end{aligned}
\end{equation}
\begin{equation}
\label{eq:indet-tuned}
\textsc{Indet-BCS(Tuned BM)} =
\max\Bigl(1,\; m - \max\limits_{c \in T[i+m-1]} \tau(c)\Bigr).
\end{equation}

The use of $\tau$ for indeterminate strings increases the cost of a bad character table lookup from $O(1)$ to $O(\sigma)$ time. Thus we get,

\begin{lemm}
\label{lem:bc-single}
The Horspool, Sunday, Raita, Smith, and Tuned Boyer--Moore algorithm adaptations
for indeterminate strings require $O(m + \sigma)$ preprocessing time,
$O(\sigma)$ space, and $O(nm\sigma)$ worst-case search time.
\end{lemm}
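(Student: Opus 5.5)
The proof reduces to Lemma~\ref{lem:bm-complexity} plus an accounting of the extra work an indeterminate text letter causes. We treat preprocessing, space and search separately.

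\emph{Preprocessing and space.} The adaptations in Equations~(\ref{eq:bc-indet})--(\ref{eq:indet-tuned}) use exactly the table $\tau$ of Equation~\ref{eq:bm-bc-tau}. That table is indexed by letters $c\in\Sigma$, not by elements of $\Sigma'$, and it is built from the pattern alone. Building it means initializing $\sigma$ entries and then scanning the $m$ pattern positions once. If the pattern is also indeterminate, each position $P[k]$ updates $\tau(c)$ for every $c\in P[k]$. This is the one point that needs care: the bound stays $O(m+\sigma)$ only if each set $P[k]$ can be enumerated in time proportional to its size and the total size is treated as $O(m)$, or if $\sigma$ is regarded as a constant word width under the power-of-two encoding. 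I would state that the construction is the same as in the solid case, so Lemma~\ref{lem:bm-complexity} gives $O(m+\sigma)$ time and $O(\sigma)$ space. The key observation is that no table over $\Sigma'$, which would have size $2^\sigma$, is ever built.

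\emph{Search time.} Each variant still advances the window by a shift $s\ge 1$; every formula is wrapped in $\max(1,\cdot)$. So there are at most $n-m+1=O(n)$ alignments. At each alignment, the verification compares at most $m$ pattern--text letter pairs. Each comparison is a single bitwise AND under the power-of-two encoding, so it costs $O(1)$, and verification is $O(m)$ per alignment. The shift computation then evaluates a constant number of lookup characters: one for Horspool, Sunday and Tuned BM, two for Smith, three for Raita. For each lookup character $\lambda=T[k]$ it computes $\max\{\tau(c):c\in\lambda\}$, which takes $O(|\lambda|)\subseteq O(\sigma)$ time by iterating over the set bits of the code. Each alignment therefore costs $O(m+\sigma)$, and the total is $O(n(m+\sigma))$, which is $O(nm\sigma)$ since $m,\sigma\ge1$.

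For the worst case it suffices to give an upper bound. One could cite a solid-string family, such as $t=a^n$, $p=a^m$, to show that $\Theta(nm)$ comparisons really occur, but the lemma only asserts an upper bound.

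The only delicate step is the per-lookup $O(\sigma)$ cost and the preprocessing claim for indeterminate patterns; the rest follows directly from Lemma~\ref{lem:bm-complexity}.
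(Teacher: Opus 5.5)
Your proposal is correct and matches the paper's reasoning: the paper inherits the preprocessing and space bounds from Lemma~\ref{lem:bm-complexity}, since $\tau$ is the unchanged table of Equation~\ref{eq:bm-bc-tau}, and charges $O(\sigma)$ per $\tau$-lookup; your per-alignment count gives the sharper $O(n(m+\sigma))$, whereas the paper just multiplies the solid $O(nm)$ bound by $\sigma$. The ``only if'' in your preprocessing paragraph is too pessimistic, because scanning $P$ right to left and enumerating only the bits of $P[k]$ not yet in a running mask of already-assigned letters sets each $\tau(c)$ at most once, so $O(m+\sigma)$ holds for indeterminate patterns without treating $\sigma$ as a constant.
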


We adapt the two character-pair Boyer--Moore variants,
\itbf{Zhu--Takaoka} and \itbf{Berry--Ravindran}, using the same
strategy as for the single-character variants. The only difference is
that $\tau$ is replaced by $\tau_{cp}$, and the single lookup character is
replaced by a pair of lookup characters chosen as each variant
prescribes.

\begin{equation}
\label{eq:bc-zt}
\textsc{Indet-BCS(Zhu--Takaoka)}=\max\!\Bigl(1,\;m-\min_{\substack{c_1\in T[i+m-2]\\ c_2\in T[i+m-1]}}\tau_{cp}[c_1,c_2]\Bigr)
\end{equation}
\begin{equation}
\label{eq:bc-br}
\textsc{Indet-BCS(Berry--Ravindran)} =
\begin{cases}
\max\Bigl(1,\;m-\min\limits_{\substack{c_1\in T[i+m]\\ c_2\in T[i+m+1]}}\tau_{cp}[c_1,c_2]\Bigr), & \text{when } i+m<n,\\[12pt]
\max\Bigl(1,\;m-\min\limits_{\substack{c_1\in T[i+m]\\ c_2\in \Sigma}}\tau_{cp}[c_1,c_2]\Bigr), & \text{when } i+m=n,\\[12pt]
1, & \text{when } i+m>n.
\end{cases}
\end{equation}

Analogous to the use of $\tau$, employing $\tau_{cp}$ for indeterminate strings increases the cost of a bad-character table lookup from $O(1)$ to $O(\sigma)$ time. Thus we get,
\begin{lemm}
\label{lem:bc-pair}
The Zhu--Takaoka and Berry--Ravindran algorithm adaptations for indeterminate
strings require $O(m+\sigma^2)$ preprocessing time,
$O(\sigma^2)$ space, and $O(nm\sigma)$ worst-case search time.
\end{lemm}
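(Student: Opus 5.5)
The plan is to follow the proof of Lemma~\ref{lem:bc-single}, separating preprocessing, space, and search, and to reuse the solid-string bounds of Lemma~\ref{lem:bm-complexity} wherever the adaptation leaves the classical algorithm unchanged.

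\emph{Preprocessing and space.} The table $\tau_{cp}$ is indexed by pairs $(c_1,c_2)\in\Sigma\times\Sigma$, and the adaptations only change how it is \emph{queried}, not how it is built. First I would initialise all $\sigma^2$ entries to their default value, at cost $O(\sigma^2)$. Then I would make one left-to-right pass over the consecutive pattern pairs $(P[k],P[k+1])$ for $1\le k<m$, overwriting each entry in $O(1)$, at cost $O(m)$. If the pattern also contains indeterminate letters, each pair is expanded into its member pairs. This is where I must say precisely which case the lemma covers: either the pattern is solid, or $\sigma$ is treated as bounded when expanding pattern pairs. The result is $O(m+\sigma^2)$ time. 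The only stored structure is $\tau_{cp}$, so the space is $O(\sigma^2)$; the text is scanned in place, and the power-of-two encoding stores each letter in $O(1)$ words.

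\emph{Search.} I would bound the number of alignments by $n-m+1\le n$, since every shift in \eqref{eq:bc-zt} and \eqref{eq:bc-br} is at least $1$. I would then bound the work at a single alignment in two parts:
\begin{itemize}
\item at most $m$ right-to-left letter comparisons, each an $O(1)$ bitwise AND under the power-of-two encoding;
\item exactly one bad-character evaluation.
\end{itemize}
The boundary cases of \eqref{eq:bc-br} need separate handling. The case $i+m>n$ is $O(1)$. The case $i+m=n$ ranges over $c_2\in\Sigma$ and is no worse than the general case. Correctness of the shifts, namely that no occurrence is skipped, follows because the minimum over all member pairs is a safe lower bound on the smallest shift allowed by any resolution of the two indeterminate letters. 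This is the same argument as for \eqref{eq:bc-indet}.

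\emph{Main obstacle.} The hard step is the claimed $O(\sigma)$ cost of one pair lookup. Enumerating all $c_1\in\lambda_1$ and $c_2\in\lambda_2$ naively costs $|\lambda_1|\,|\lambda_2|\le\sigma^2$. My first attempt would be a preprocessing step that makes the inner minimum cheap. For each $c_2$, I would try to make $\min_{c_1\in\lambda_1}\tau_{cp}[c_1,c_2]$ available by extracting set bits from the bitmask of $\lambda_1$. Alternatively, I would observe that only pairs actually occurring in $P$ carry non-default values, and scan those instead. If neither gives $O(\sigma)$ per lookup, I would fall back on the following argument. The lookup is performed once per alignment, outside the comparison loop, so the total search cost is $O(n(m+\sigma^2))$. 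This is $O(nm\sigma)$ whenever $\sigma\le m$. Otherwise I would adopt the same accounting convention used in Lemma~\ref{lem:bc-single}, which charges each table lookup over an indeterminate letter as $O(\sigma)$. In either case, $n$ alignments each costing $O(m\sigma)$ gives the stated $O(nm\sigma)$ worst-case bound.
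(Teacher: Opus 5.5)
Your skeleton matches the paper's. Preprocessing and space are inherited from Lemma~\ref{lem:bm-complexity}, and the search is charged as at most $n$ alignments, each with $m$ comparisons and one pair lookup. The paper then simply asserts that a $\tau_{cp}$ lookup costs $O(\sigma)$, ``analogous to $\tau$''. You rightly see that it is not analogous: the minimum in \eqref{eq:bc-zt} ranges over $|\lambda_1|\,|\lambda_2|$ entries, up to $\sigma^2$. Your branch $\sigma\le m$ handles this correctly via $m+\sigma^2\le 2m\sigma$.

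The gap is the branch $\sigma>m$. Adopting ``the accounting convention of Lemma~\ref{lem:bc-single}'' is circular. In that lemma the $O(\sigma)$ charge is earned, because $\max\{\tau(c):c\in\lambda\}$ reads only $|\lambda|\le\sigma$ entries; the same reasoning applied to a pair gives $O(\sigma^2)$. So this step assumes exactly the bound you had just failed to prove. It is also not harmless. Take the plain double loop with $m=2$ and a text in which every letter is $\Sigma$. Every alignment is then an occurrence, so any correct shift is $1$, and each of the $n-1$ alignments costs $\Theta(\sigma^2)$. The total is $\Theta(n\sigma^2)$, which is not $O(nm\sigma)=O(n\sigma)$.

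The missing observation is that you never needed $O(\sigma)$ per lookup. Inside $O(nm\sigma)$, each alignment may spend $O(m\sigma)$, and your own second idea (only pairs occurring in $p$ carry non-default values) achieves this:
\begin{itemize}
\item Let $\mathit{occ}$ be the bitwise OR of the letters of $p$, computed once in $O(m)$.
\item Loop only over the $c_1\in\lambda_1$ that occur in $p$, i.e.\ the set bits of $\lambda_1\,\&\,\mathit{occ}$; there are at most $\min(\sigma,m)$ of them. For each, loop over $c_2\in\lambda_2$.
\item In the standard Zhu--Takaoka and Berry--Ravindran tables, a row $c_1$ absent from $p$ holds only default values, which depend on $c_2$ only through whether $c_2=p[1]$. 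So all such rows together contribute a single candidate, found by the $O(1)$ test $\lambda_2\,\&\,p[1]\neq 0$.
\end{itemize}
A lookup then costs $O(\sigma+\min(\sigma,m)\,\sigma)=O(m\sigma)$. This gives $O(nm\sigma)$ search for every $\sigma$, with the table unchanged. Alternatively, scan the consecutive pairs of $p$ from right to left, testing $p[k]\,\&\,\lambda_1\neq 0$ and $p[k+1]\,\&\,\lambda_2\neq 0$, plus the boundary tests on $p[1]$ and $p[m]$; this recovers the same minimum in only $O(m)$.

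Finally, drop the option of ``treating $\sigma$ as bounded'', which would make every $\sigma$-factor in the lemma vacuous, and commit to a solid pattern. The paper's $\tau$ in \eqref{eq:bm-bc-tau} is defined over the solid $p$, and these baselines are extended to indeterminate \emph{text}. So the $O(m+\sigma^2)$ preprocessing and $O(\sigma^2)$ space carry over verbatim from Lemma~\ref{lem:bm-complexity}.
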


\subsection{Faster Boyer--Moore Variants}
\label{ssec:fast-bmindet}
In this section, we improve upon the only existing algorithm under the BC+GS category for indeterminate strings~\cite{Neerja2024}. We propose four BC+GS variants extending \textsc{BM\_Indet}~\cite{Neerja2024}. All four use the same fast good-suffix rule and differ only in their bad-character rule; we describe the bad-character rules first, then the good-suffix rule. In Section~\ref{ssec:indetbc}, using $\tau$ naively increases bad-character lookup to $O(\sigma)$ time. To retain $O(1)$ lookup, we propose the following $\tau$-based rule:

\noindent\textbf{BC Rule~I (nearest solid letter):} We keep the table $\tau$ unchanged and modify only which
character we query. When the current text position $i+j-1$ holds a
solid letter $c$, we apply the ordinary bad-character rule. When
$T[i+j-1]$ is instead an indeterminate letter $\lambda$, we avoid
examining every $c \in \lambda$ and instead fall back on the nearest
solid letter in the unmatched prefix of the current window, if one
exists. Let
\[
  j' = \max\bigl\{\, k : 1 \leq k < j \ \text{and}\ T[i+k-1] \text{ is solid} \,\bigr\}
\]
be the position of that letter within the window, and let
$cs = T[i+j'-1]$ denote the letter itself. The shift is then
\begin{equation}
\label{eq:bc-indet-good}
\textsc{Indet-BCS-Fast(BM)}=
\begin{cases}
\max\bigl(1,\, j-\tau(c)\bigr),
  & \text{if } T[i+j-1]=c \text{ is solid},\\[4pt]
\max\bigl(1,\, j'-\tau(cs)\bigr),
  & \text{if } T[i+j-1]=\lambda \text{ is indeterminate}\\
  & \quad \text{and such a } j' \text{ exists},\\[4pt]
1,
  & \text{otherwise.}
\end{cases}
\end{equation}

The bad character rules given so far reuse the standard bad-character table $\tau$,
which stores only the rightmost occurrence of each character and is thus
insensitive to \emph{where} in the pattern a mismatch occurs. We now
develop a family of rules built on a richer, position-indexed table
$\tau'$ that, for each character, records its rightmost occurrence to the
left of a given position. This additional information supports stronger,
position-aware shifts and lets us handle an indeterminate mismatch
without collapsing it to a single solid letter, at the cost of
$O(m|\Sigma|)$ space in place of $O(|\Sigma|)$. Formally, the extended
bad-character table (position-indexed)~\cite{Gusfield1997} for indeterminate
patterns, $\tau' : \Sigma \times \{1,\ldots,m\} \rightarrow
\{0,1,\ldots,m-1\}$, is defined as

\begin{equation}\label{eq:bm-bc-tau-ext}
\tau'(c, k) =
\begin{cases}
\max \{\, i \mid 1 \le i < k \text{ and } c \in P[i] \,\}, & \text{if such } i \text{ exists}, \\[6pt]
0, & \text{otherwise.}
\end{cases}
\end{equation}
Here $\tau'(c,k)$ is the rightmost occurrence of a character $c \in
\Sigma$ in the prefix $P[1..k-1]$, for every $1 \le k \le m$ (and $0$
when $c$ does not occur in that prefix). In effect, $\tau'$ extends the
position-indexed bad-character table of Tarhio and Ukkonen~\cite{TU93} to indeterminate patterns by replacing equality with membership; Dehghani et al.~\cite{Neerja2024} use this table to compute the bad-character shift for their \textsc{BM\_Indet} algorithm, as follows.

If a mismatch occurs at position $j$ of the pattern against the text character $T[i+j-1] = \lambda$, the bad-character (BC) shift is
\begin{equation}\label{eq:bm-Indet-shift}
\textsc{Indet-BCS(BM\_Indet)} \;=\; j - \max\{\, \tau'(c,j) : c \in \lambda \,\}.
\end{equation}
The inner maximum equals the rightmost position $j' < j$ with
$P[j'] \cap \lambda \neq \emptyset$, so
$j - j'$ is the smallest shift that realigns $\lambda$ with a pattern letter it can
match; every skipped alignment is a guaranteed mismatch. Moreover, since $\tau'(c,j) \le j-1$,
the shift is always at least $1$. 

We use this extended position-indexed bad character table to compute the shifts for our three new rules presented below.

\noindent\textbf{BC Rule II: Generalization of \textsc{BM\_Indet}.}
Rule~II simply combines the two shifts we already have the fast solid-letter
rule \textsc{Indet-BCS-Fast(BM)} of Rule~I (Equation~\ref{eq:bc-indet-good})
and the bad-character rule \textsc{Indet-BCS(BM\_Indet)} of
Equation~\ref{eq:bm-Indet-shift} and applies the larger of the two:

\begin{equation}
\label{eq:bc-rule1-envelope}
\begin{aligned}
\text{\textsc{Indet-BCS(GBM\_Indet)}}
&= \max\bigl(\text{\textsc{Indet-BCS(BM\_Indet)}},\\
&\qquad\qquad \text{\textsc{Indet-BCS-Fast(BM)}}\bigr).
\end{aligned}
\end{equation}

\noindent\textbf{BC Rule III: Edge Solid Characters.}
Let $S = \{\, p : j < p \le m \text{ and } T[i+p-1] \text{ is solid} \,\}$, with
first and last solid positions $j_1 = \min S$ and $j_2 = \max S$. For a boundary position $p \in \{j_1, j_2\}$ with solid letter $c_p = T[i+p-1]$, write
$\delta(p) = p - \tau'(c_p, p)$ when $\tau'(c_p, p) > 0$, and $\delta(p) = p$
otherwise; in either case $\delta(p) \ge 1$ realigns $c_p$ with its rightmost
earlier occurrence or moves it past the pattern. The resulting shift is,
\begin{equation}
\label{eq:bc-edge}
\textsc{Indet-BCS(Rule~III)} =
\begin{cases}
\max\bigl(\delta(j_1),\, \delta(j_2)\bigr), & |S| \ge 2, \\[6pt]
\delta(j_1),                               & |S| = 1, \\[6pt]
1,                                          & S = \emptyset.
\end{cases}
\end{equation}

\noindent\textbf{BC Rule IV: Least-Frequent Solid Character.}
Rule~IV refines the edge idea of Rule~III. Rather than using the solid
letters at the two ends of the matched suffix as the lookup characters, it uses
the \emph{least frequently occurring} solid letter within the matched suffix,
whose sparse occurrences in $P$ tend to yield a larger shift. Let
$\phi(c) = |\{\, k : 1 \le k \le m \text{ and } c \in P[k] \,\}|$ be the
frequency of a letter $c$ in $P$. This can be computed by a simple left to right scan in $O(m\sigma)$-time. Let
$S = \{\, k : j < k \le m \text{ and } T[i+k-1] \text{ is solid} \,\}$ be the
solid positions of the matched suffix of $T$, with solid letters $C=\{T[i+k-1]: k \in S\}$. Let $f(c)=\min\{k \in S \mid T[i+k-1] = c\}$ be the position of its first occurrence in the matched suffix of $T$. We select a least-frequent solid
letter; if more than one attains the minimum frequency, we choose the letter $c$
with the largest $f(c)$, and set $j' = f(c)$. The resulting shift is

\begin{equation}
\label{eq:bc-freq}
\textsc{Indet-BCS(Rule~IV)} = 
\begin{cases}
\max\bigl(1,  j' - \tau'(c, j')\bigr), & \text{ if } S \neq \emptyset, \\[6pt]
1,                   & \text{ otherwise}.
\end{cases}
\end{equation}

We now present the function \textsc{Fast\_\allowbreak GSR\_\allowbreak Indet\_\allowbreak Shift} shown in Algorithm~\ref{alg:fast-bmindet-shift}, which computes the good suffix rule when either the matched suffix in the text or the matched suffix in the pattern is indeterminate. The \textsc{Fast\_\allowbreak GSR\_\allowbreak Indet\_\allowbreak Shift} on a mismatch computes
the good-suffix shift from the two \emph{solid edge letters} of the matched
suffix, the first and last solid letters $c_1$ and $c_2$ at pattern positions
$j_1\le j_2$ with distance $d=j_2-j_1$, together with the position-indexed table
$\tau'$. Because the pattern shifts rigidly, the matched suffix can recur only
where $c_1$ and $c_2$ recur in $P$ at the same distance $d$. The algorithm
therefore walks the earlier occurrences of $c_1$ in decreasing order through
$\tau'$ and, for each occurrence $x_1$, tests whether $c_2$ occurs at $x_1+d$; the
first occurrence that passes gives the shift $j_1-x_1$, which is the smallest that
realigns both solid letters. If $c_1$ has no earlier occurrence the pattern is shifted
past by $j_1$; a single solid letter is realigned directly through $\tau'$, and a
matched suffix with no solid letter shifts by $1$. Unlike \textsc{Indet\_GSR\_Shift} of \textsc{BM\_Indet}~\cite{Neerja2024}, which
rebuilds a prefix array at every indeterminate alignment, the shift here is read
from a table in constant time and evaluated, so no prefix array is constructed during the search.

\begin{algorithm}[t!]
\caption{\textsc{Fast\_GSR\_Indet\_Shift}$(\tau',j_1,j_2)$}
\label{alg:fast-bmindet-shift}
\begin{algorithmic}[1]
\Function{Fast\_GSR\_Indet\_Shift}{$\tau',j_1,j_2$}: Integer
 \Comment{Let $c_1, c_2$ be the solid edge letters in $T[i{+}j \dots i{+}m{-}1]$
 at pattern positions $j_1, j_2 \geq j_1 \in \{j{+}1, \dots, m\}$}
    \If{$j_1 =0 \wedge j_2 = 0$}\Comment{no solid letters in matched suffix in $T$}
        \State \Return $1$ \Comment{no solid letters}
    \EndIf
    \State $c_1 \gets T[i{+}j_1{-}1]$
    \State $c_2 \gets T[i{+}j_2{-}1]$
    \State $x_1 \gets \tau'(c_1, j_1)$ \Comment{right most occurrence of $c_1$ in $P[1..j_1-1]$}
    \If{$j_1 = j_2$} \Comment{only one solid letter in matched suffix in $T$}
        
        \State \Return $\mathit{gs\_shift} \gets (j_1 - x_1)$ 
    \Else
    \State $d \gets j_2 - j_1$ 
    \State $x_2 \gets \tau'(c_2, x_1+d+1)$ \Comment{right most occurrence of $c_2$ in $P[1..x_1+d]$}
    \While{$x_2 \neq x_1+d \wedge x_1 \neq 0 $}
    \State $x_1 \gets \tau'(c_1,x_1)$
    \State $x_2 \gets \tau'(c_2, x_1+d+1)$
    \EndWhile
       \State \Return $\mathit{gs\_shift} \gets (j_1 - x_1)$
    \EndIf
\EndFunction
\end{algorithmic}
\end{algorithm}

We devise four fast algorithms, which we call the
\textit{\textsc{Fast\_BM\_Indet} algorithm variants}, by pairing each of
BC~Rules~I--IV with the new good-suffix rule. Specifically, a
\textsc{Fast\_BM\_Indet} variant is obtained from the \textsc{BM\_Indet}
algorithm of~\cite{Neerja2024} by replacing its
\textsc{bad\_character\_rule\_shift} with one of BC~Rules~I--IV and its
\textsc{indet\_gsr\_shift} with \textsc{Fast\_\allowbreak GSR\_\allowbreak
Indet\_\allowbreak Shift}, as shown in Algorithm~\ref{alg:fastbmindet} in the
Appendix.

\begin{theorem}\label{thm:fastbm-correct}All the \textsc{Fast\_BM\_Indet} algorithm variants correctly identify all occurrences of an indeterminate pattern $P$ in an indeterminate text $T$.
\end{theorem}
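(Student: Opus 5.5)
The plan is to follow the standard Boyer--Moore correctness template. Since the search verifies every alignment it stops at by direct letter-by-letter comparison under $\approx$, it never reports a false occurrence. Correctness therefore reduces to a \emph{safety} property: every shift $s$ applied at alignment $i$ must skip no occurrence. Concretely, for every $1\le s'<s$, the alignment $i+s'$ must contain a position where $P$ and $T$ mismatch. Because the applied shift is the maximum of a bad-character shift and the shift from \textsc{Fast\_GSR\_Indet\_Shift}, it suffices to show that each component is individually safe. Each component is also at least $1$, so the search terminates.

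For the bad-character rules I will argue case by case. The rule \textsc{Indet-BCS(BM\_Indet)} of Equation~\ref{eq:bm-Indet-shift} is already justified in the text: $j'=\max_{c\in\lambda}\tau'(c,j)$ is the rightmost $j'<j$ with $P[j']\approx\lambda$, so every smaller shift places a letter disjoint from $\lambda$ under $T[i+j-1]$.

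For Rule~I, the solid case is the classical argument. In the indeterminate case, position $j'$ lies in the unmatched part with a solid letter $cs$. At every alignment $i+s'$ with $s'<j'-\tau(cs)$, the text letter $cs$ faces $P[j'-s']$. That position is right of the rightmost occurrence of $cs$ in $P$ whenever $j'-s'>\tau(cs)$, so it cannot contain $cs$, and since $cs$ is solid the two letters mismatch. The subtle point is that $j'-s'\ge 1$ is needed, and this holds because $s'<j'$.

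Rule~II is safe as a maximum of two safe shifts.

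Rules~III and~IV use the same lemma, stated once: for any position $p$ of the window whose text letter $c$ is solid, the shift $p-\tau'(c,p)$, or $p$ if $\tau'(c,p)=0$, is safe, because for smaller $s'$ the letter $c$ faces $P[p-s']$ with $\tau'(c,p)<p-s'<p$, which does not contain $c$ by the definition of $\tau'$. This applies verbatim to $p=j_1,j_2$ in Rule~III and to $p=j'$ in Rule~IV. The positions in the matched suffix need not be pattern mismatches for the argument, since only the text letter is used.

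The main obstacle is \textsc{Fast\_GSR\_Indet\_Shift}. I would prove a loop invariant: at each loop test, no shift $s'$ with $j_1-x_1^{\text{old}}<s'<j_1-x_1$ is an occurrence. Here $x_1^{\text{old}}$ is the previous value of $x_1$, initially $j_1$, so the range starts at $s'\ge1$. Any occurrence shift $s'$ must put some pattern position containing $c_1$ under the text position of $c_1$, so $j_1-s'$ is an occurrence of $c_1$ in $P[1..j_1-1]$. Since $\tau'(c_1,\cdot)$ enumerates these occurrences in decreasing order, the loop examines every candidate $x_1$ in order. For each candidate it rejects, the pattern position $x_1+d$ does not contain $c_2$, because $\tau'(c_2,x_1+d+1)\ne x_1+d$, so $c_2$ mismatches at alignment $i+j_1-x_1$.

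When $x_1=0$, no occurrence of $c_1$ remains, and the final shift $j_1$ moves $c_1$ past the pattern. I must also check that $x_1+d+1\le m$, so that $\tau'$ stays in range; this holds because $x_1<j_1$ and $j_2\le m$. Finally, I would handle the degenerate cases $j_1=j_2$, which reduces to the lemma above, and no solid letter, which gives shift $1$. Combining these with the verification step yields the theorem.
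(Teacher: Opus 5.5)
Your proof has the same structure as the paper's: reported positions are verified, each component shift is safe, and a maximum of safe shifts is safe. Your treatment of BC Rules~I--IV and of \textsc{Fast\_GSR\_Indet\_Shift} is more careful than the paper's, notably the enumeration of the occurrences of $c_1$ through $\tau'$ and the range check $x_1+d+1\le m$.

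There is, however, one case you never cover. You assert that the applied shift is always the maximum of a bad-character shift and the output of \textsc{Fast\_GSR\_Indet\_Shift}. But Algorithm~\ref{alg:fastbmindet} calls that routine only when $P$ is indeterminate or the matched text suffix contains an indeterminate letter. When $P$ is solid and $T[i+j..i+m-1]$ is solid, it applies the classical \textsc{good\_suffix\_rule\_shift} computed from $L$ and $\ell$. This happens both after a mismatch and after a full match (with $j=0$). Your argument says nothing about the safety of that shift, whereas the paper's proof treats this branch explicitly.

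The fix is short. On window positions $j+1..m$ both sides are solid, so $\approx$ is equality there and $T[i+j..i+m-1]=P[j+1..m]$. An occurrence at shift $s'$ therefore forces $P[j+1-s'..m-s']=P[j+1..m]$ when $s'\le j$, and $P[1..m-s']=P[s'+1..m]$ when $s'>j$. These are precisely the conditions over which $L[j+1]$ (with $k=j+1-s'$) and $\ell[j+1]$ (with $r=m-s'<m-j$) take their maxima. Hence the classical shift never exceeds such an $s'$. Indeterminate text letters outside the matched suffix, including the mismatched one, play no role.

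A smaller point concerns Rule~I. Your argument uses that no pattern position to the right of $\tau(cs)$ contains $cs$, i.e.\ it reads $\tau(c)=\max\{k : c\in P[k]\}$. That membership reading is what the theorem needs. The appendix, however, defines $\tau$ through equality $p[k]=c$, and under that reading Rule~I is unsafe for indeterminate $P$. A solid text letter $a$ whose only compatible pattern cells are indeterminate sets containing $a$ would get $\tau(a)=0$. For example, with $P=\{a,b\}c$ and $T=bacc$, the mismatch of $a$ at $j=2$ then yields shift $2$ and skips the occurrence at position $2$. You should state the membership convention explicitly.
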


\begin{proof}
A shift $s$ at alignment $i$ is \emph{safe} if $P$ does not occurs at any alignment $i+t$, where $1\le t<s$. Under shift $t$, a solid text letter $c$ aligned with a pattern letter that does not contain $c$, precludes an occurrence of $P$ at $a+t$. The bad-character shift advances $P$ only past alignments placing its inspected text characters against cells lacking them, and is therefore safe. When $P$ and the matched suffix are solid, matching reduces to equality and the classical good-suffix shift is safe. Otherwise, Algorithm~\ref{alg:fast-bmindet-shift} inspects earlier occurrences of the first solid anchor $c_1$ in order of increasing shift and returns the first where the second anchor $c_2$ appears at distance $d$; every smaller shift it skips places $c_1$ or $c_2$ against a letter that does not contain it and is thus safe, while returning $j_1$ (no such occurrence) or $1$ (no solid letter) without skipping a matching position. Because the applied shift is the maximum of safe shifts, it is safe, and each reported position is confirmed by verifying all $m$ characters. Thus, no occurrence of $P$ is skipped or falsely reported.
\qed
\end{proof}

\begin{theorem}\label{thm:fastbm}
The \textsc{Fast\_BM\_Indet} algorithm variants require at most $O(nm+ m \sigma^2)$ time and $O(m \sigma)$ space.
\end{theorem}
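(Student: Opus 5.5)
The argument splits into preprocessing and search, bounding time and space for each and then adding them.

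\emph{Preprocessing.} The dominant structure is the position-indexed table $\tau'$ of Equation~\ref{eq:bm-bc-tau-ext}, with $\sigma$ rows and $m$ columns, so it uses $O(m\sigma)$ space. It can be filled in $O(m\sigma)$ time by a left-to-right sweep. Column $k+1$ is a copy of column $k$, except that every $c\in P[k]$ is overwritten with $k$. Testing membership $c\in P[k]$ costs $O(1)$ per letter under the power-of-two encoding, so each column costs $O(\sigma)$. Rule~I additionally needs the ordinary table $\tau$, which costs $O(m+\sigma)$ time and space. Rule~IV needs the frequencies $\phi$, which cost $O(m\sigma)$ time and $O(\sigma)$ space. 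The solid-pattern good-suffix tables inherited from \textsc{BM\_Indet} cost $O(m)$ by Lemma~\ref{lem:bm-complexity}. Any remaining preprocessing inherited from \textsc{BM\_Indet}, for example pair-type information, is charged to the $O(m\sigma^2)$ term. So preprocessing is $O(m\sigma^2)$ time and $O(m\sigma)$ space.

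\emph{Search.} There are at most $n-m+1$ alignments. At each alignment, the right-to-left comparison examines at most $m$ positions, and each match test is a single bitwise AND, so it costs $O(m)$. Each of the bad-character rules then costs $O(m)$ at one alignment:
\begin{itemize}
\item Rule~I either does an $O(1)$ lookup or scans left for the nearest solid position $j'<j$, which is $O(m)$.
\item Rule~II does the Rule~I work plus Equation~\ref{eq:bm-Indet-shift}. That equation looks like $O(\sigma)$ per indeterminate letter, but it is the rightmost $j'<j$ with $P[j']\cap\lambda\neq\emptyset$. So it can be found in $O(m)$ by scanning $P$ leftward with bitwise ANDs, which keeps the bound at $O(m)$.
\item Rule~III finds $j_1,j_2$ by an $O(m)$ scan of the matched suffix and then does two $\tau'$ lookups.
\item Rule~IV scans the matched suffix once, keeping the least-frequent solid letter with largest $f(c)$. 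Each letter costs an $O(1)$ lookup of $\phi$, so the scan is $O(m)$.
\end{itemize}

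The main obstacle is \textsc{Fast\_GSR\_Indet\_Shift}, because of its while loop. Each iteration replaces $x_1$ by $\tau'(c_1,x_1)<x_1$, so $x_1$ strictly decreases and stays nonnegative. The loop therefore runs at most $j_1\le m$ times, and each iteration does $O(1)$ table lookups. So one call costs $O(m)$. The boundary case $x_1=0$ also needs a check: the lookup $\tau'(c_2,x_1+d+1)$ must have argument at most $m$, or be guarded, so that no iteration costs more than constant time.

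Summing over alignments, the search costs $O((n-m+1)\cdot m)=O(nm)$, and the shift of at least $1$ ensures the number of alignments is bounded. Adding the preprocessing gives $O(nm+m\sigma^2)$ time. The space is $O(m\sigma)$ for $\tau'$ plus lower-order tables, which completes the bound.
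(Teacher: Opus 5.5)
Your argument is correct and has the same skeleton as the paper's proof: the cost of building $\tau'$, plus at most $n$ alignments times a per-alignment cost. You cost several pieces differently, though, and one of the differences matters.

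\emph{Building $\tau'$.} The paper charges $O(\sigma)$ per membership test and so builds $\tau'$ in $O(m\sigma^2)$. Your column-copy sweep with single-bit tests gives $O(m\sigma)$, so the $m\sigma^2$ term is slack in your proof. The sentence charging unspecified ``pair-type information'' to it can be dropped: the four variants build only $\tau'$, $\tau$, $\phi$ and, for solid $P$, $L$ and $\ell$.

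\emph{Rule~II.} This is the substantive difference. The paper evaluates the \textsc{BM\_Indet} component of Rule~II as a maximum over the letters of $\lambda$, in $O(\sigma)$ time per alignment. Its proof therefore ends with $O(n(m+\sigma)+m\sigma^2)$ for that variant, which is not $O(nm+m\sigma^2)$ in general; it fails when $\sigma\gg m$ and $n\gg m\sigma$. Your leftward scan of $P$ for the rightmost $j'<j$ with $P[j']\cap\lambda\neq\emptyset$ costs $O(m)$. It yields the stated bound for all four variants uniformly.

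\emph{Rules~I, III and~IV.} Charging $O(m)$ instead of the paper's $O(1)$ is the more defensible choice. Locating the nearest solid text letter left of $j$, or the least-frequent solid letter of the matched suffix, is not constant-time without extra bookkeeping. It also costs nothing, since the comparison scan is already $O(m)$; the paper's finer figures bear on practical speed, not the worst case. For Rule~IV, the tie-break on $f(c)$ needs an ``already seen'' mark per letter, e.g.\ a timestamped array of size $\sigma$, so that only first occurrences compete.

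\emph{\textsc{Fast\_GSR\_Indet\_Shift}.} You supply what the paper only asserts: $x_1$ strictly decreases, so the loop runs fewer than $j_1$ times. The boundary check you leave open is settled by $x_1\le j_1-1$, which gives $x_1+d+1\le j_2\le m$, so no guard is needed.

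Finally, your $O(1)$ membership and intersection tests assume an encoded letter fits in a machine word. The $O(nm)$ comparison term already makes that same assumption, so this costs you nothing extra.
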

\begin{proof}
The function \textsc{Fast\_\allowbreak GSR\_\allowbreak Indet\_\allowbreak Shift} computes the good-suffix shift in $O(m)$ time per alignment.
The standard bad-character table $\tau$ can be built in $O(m+\sigma)$ time (Lemma~\ref{lem:bm-complexity}). The position-indexed table $\tau'$ has $m\sigma$ entries; since testing membership $c\in P[i]$ for an indeterminate letter costs $O(\sigma)$ time, constructing $\tau'$ requires $O(m\sigma^2)$ time and $O(m\sigma)$ space.
At each alignment, the bad-character shift is computed in $O(1)$ time for BC~Rule~I, in $O(\sigma)$ time for BC~Rule~II (taking a maximum over the letters in an indeterminate text cell), and in $O(1)$ time for BC~Rules~III--IV.
Since there are at most $n$ alignments, the total running time for these variants range from $O(n(m+\sigma) +m\sigma^2)$ to $O(nm +m\sigma^2)$ in the worst case.\qed

\end{proof}
Thus \textsc{Fast\_BM\_Indet} improves on \textsc{BM\_Indet}~\cite{Neerja2024} by a factor of $O(\sqrt m)$ in the worst case and is faster across all tested inputs.

\section{Experimental Evaluation}
\label{sec:experimental-results}
In this section, we evaluate two families of algorithms. First, we compare seven
classical bad-character variants (Horspool, Sunday, Raita, Smith, Tuned BM,
Zhu--Takaoka, Berry--Ravindran), the algorithm of Holub et al.~\cite{HolubSmythWang2008},
and brute force. Second, we evaluate \textsc{Fast\_BM\_Indet} combined with each of
our bad-character Rules~I--IV and with Zhu--Takaoka, and compare these hybrids with
\textsc{BM\_Indet} and \textsc{KMP\_Indet}~\cite{Neerja2024}, and with brute force. We omit
DBWT~\cite{Daykin2019}, which was shown in~\cite{Neerja2024} to be substantially slower
than \textsc{BM\_Indet}. Experimental data consist of random strings over alphabets
$\sigma\in\{4,9,20\}$ and the \textit{E.\,coli} K-12 MG1655 genome (NCBI RefSeq
NC\_000913.3) with natural IUPAC ambiguity codes, all encoded using the power-of-two
method of Section~\ref{sec:preliminaries}. All algorithms are implemented in C++
(GCC~6.3.0, \texttt{-O2}) and run single-threaded on an Intel Core i7-1165G7
(Windows~11). Following Dehghani et al.~\cite{Neerja2024}, text length $n=1000i$
and pattern length $m=40i$ grow with a parameter $i$, with $0.06i$ and $4i$
indeterminate letters (rounded). Each point is the mean over $100$ runs (ten
inputs, each run ten times). We report results for short texts ($i=1,\ldots,10$)
and long texts ($i=100,200,\ldots,1000$, up to $n=10^6$), plotting running time in
milliseconds against $n$ for all algorithms on identical inputs.

\begin{figure}[t]
\centering
\includegraphics[width=\linewidth]{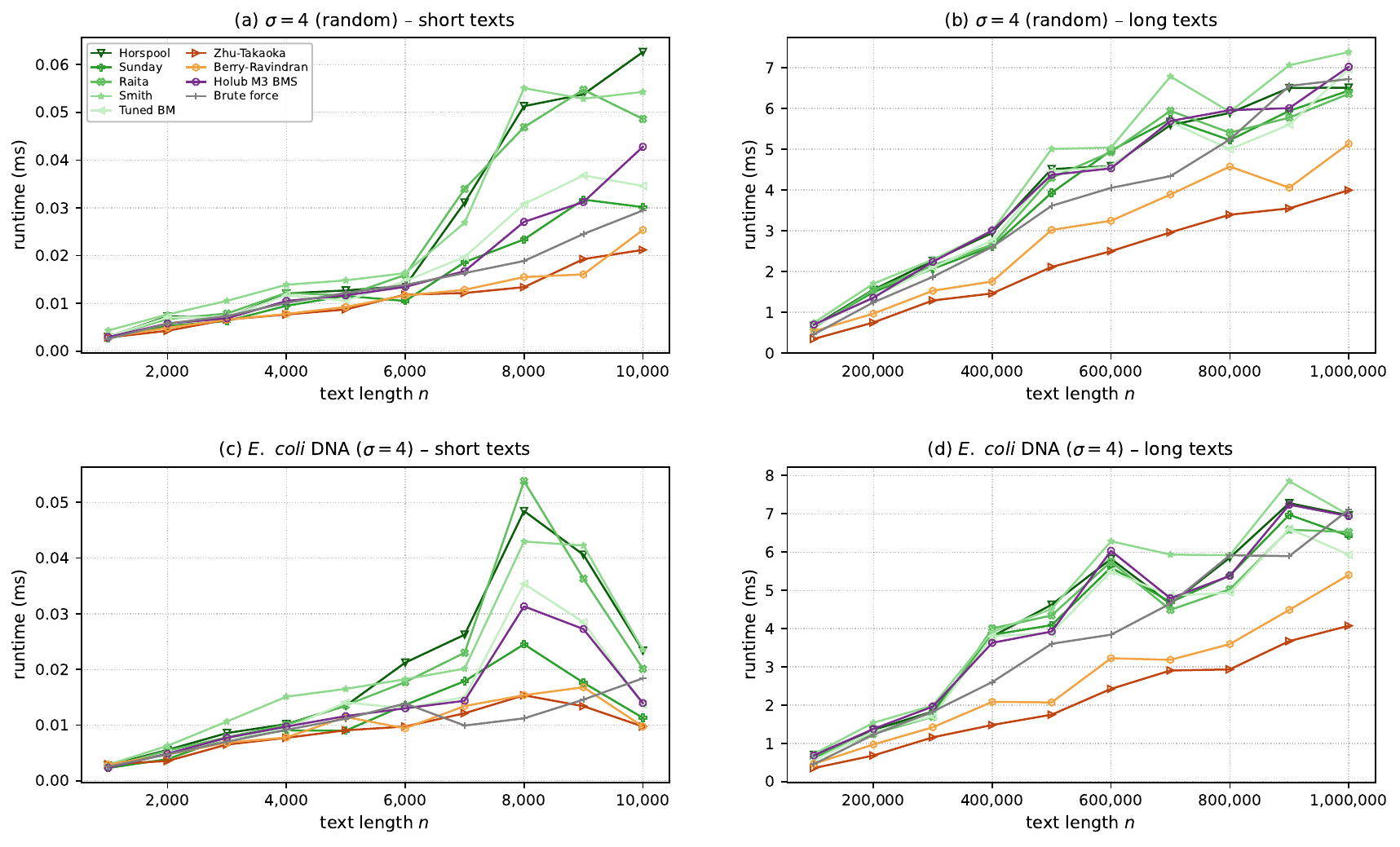}
\caption{Running time (ms) against text length $n$ of bad character only algorithms, which include the bad-character-only adaptations of Horspool, Sunday, Raita, Smith, Tuned BM, Zhu--Takaoka and Berry--Ravindran, the algorithm of Holub et al.\ (Holub M3 BMS), and brute force. Inputs are random strings with $\sigma=4$ (a, b) and the \textit{E.\,coli} genome, $\sigma=4$ (c, d); short texts on the left and long texts on the right.}
\label{fig:bc-1}
\end{figure}

\begin{figure}[t]
\centering
\includegraphics[width=\linewidth]{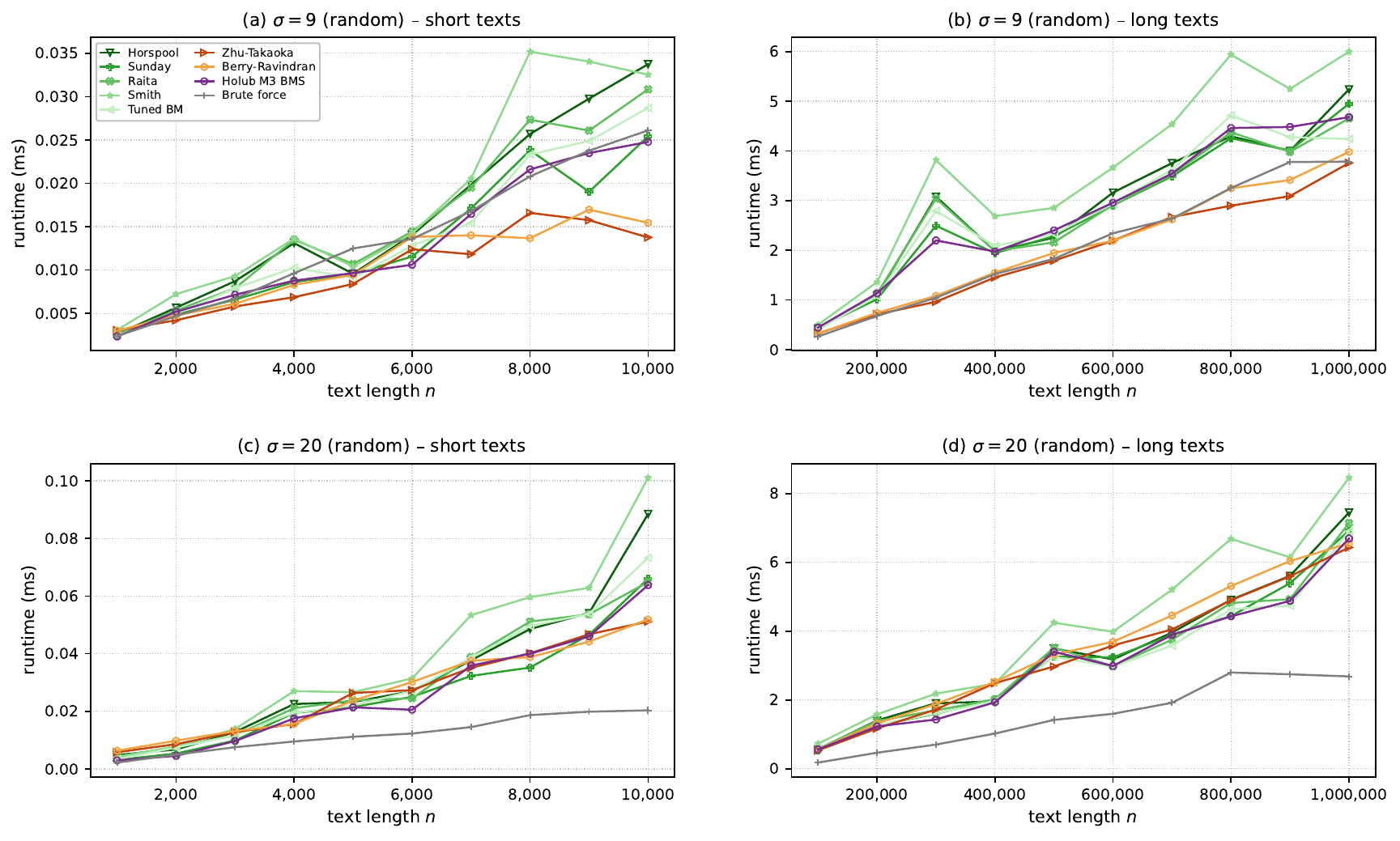}
\caption{Running time (ms) against text length $n$ of all algorithms referred to in Figure~\ref{fig:bc-1}, on random strings with $\sigma=9$ (a, b) and $\sigma=20$ (c, d); short texts on the left and long texts on the right.}
\label{fig:bc-2}
\end{figure}
\noindent\textbf{Scaling of the Bad-Character Rules:} We now compare the bad-character-only algorithms, with brute force as a reference, in Figures~\ref{fig:bc-1} and~\ref{fig:bc-2}. The character-pair variants, Zhu--Takaoka and Berry--Ravindran, are the fastest on the small alphabets ($\sigma=4$ and the \textit{E.\,coli} DNA): a letter pair occurs in the pattern less often than a single letter, so it yields larger shifts. The advantage of the pair shrinks as the alphabet grows, and at $\sigma=20$ all Boyer--Moore variants lie in a narrow band. Brute force runs in near-linear time on random texts, since most alignments fail at the first comparison, so a shift rule helps only when the alignments it skips outweigh the cost of computing it. This holds for the pair variants on the small alphabets, whereas the single-character variants are no better than brute force. At $\sigma=9$ the pair variants only match it, and at $\sigma=20$ it is the fastest algorithm, because shift computation inspects every letter of an indeterminate letter requiring additional $O(\sigma)$ time. Bad-character rules alone thus pay off only on small alphabets, the setting of genomic data, which motivates their combination with the good-suffix rule. Among the pair variants, Zhu--Takaoka is the more dependable: it is the lowest curve at every long-text size on the small alphabets, and elsewhere it is ahead of or level with Berry--Ravindran. We therefore use Zhu--Takaoka as the bad-character rule in the remaining experiments.

\begin{figure}[t]
\centering
\includegraphics[width=\linewidth]{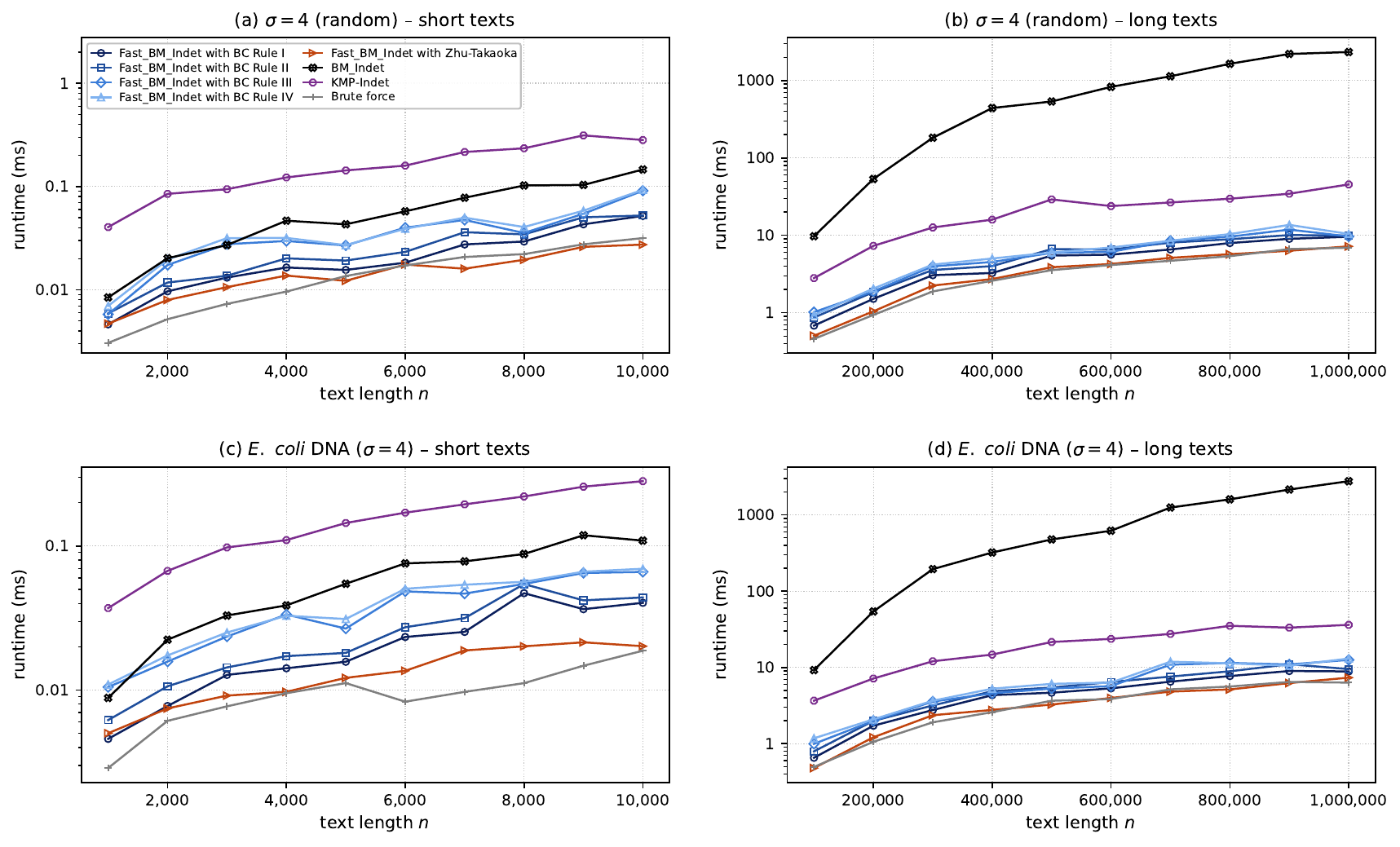}
\caption{Running time (logarithmic scale) of all \textsc{Fast\_BM\_Indet} variants and the good-suffix rule with the adapted Zhu--Takaoka variant, against \textsc{BM\_Indet}~\cite{Neerja2024}, \textsc{KMP\_Indet}~\cite{Neerja2024} and brute force, on random strings with $\sigma=4$ ($a, b$) and on the \textit{E.\,coli} genome ($c, d$), for short texts (left) and long texts (right).}
\label{fig:gs-1}
\end{figure}

\begin{figure}[t]
\centering
\includegraphics[width=\linewidth]{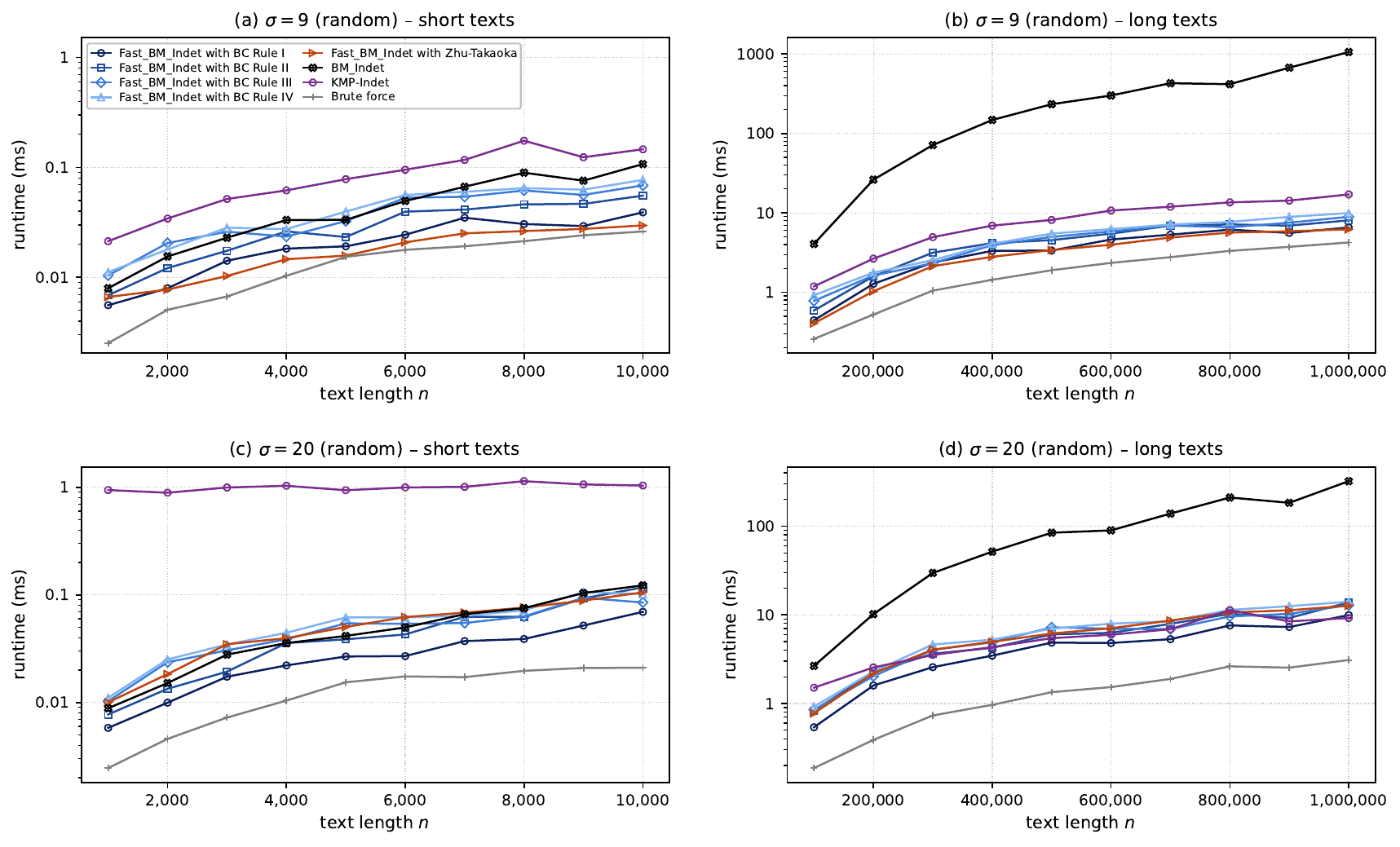}
\caption{Running time (logarithmic scale) of all \textsc{Fast\_BM\_Indet} variants and the good-suffix rule with the adapted Zhu--Takaoka variant, against \textsc{BM\_Indet}~\cite{Neerja2024}, \textsc{KMP\_Indet}~\cite{Neerja2024} and brute force on random strings with $\sigma=9$ ($a, b$) and $\sigma=20$ ($c, d$), for short texts (left) and long texts (right).}
\label{fig:gs-2}
\end{figure}

\noindent\textbf{Combining Bad-Character Rules with the Good-Suffix Rule:}
We evaluate our four \textsc{Fast\_BM\_Indet} variants (BC~Rules~I--IV paired with
the new good-suffix rule), together with a fifth variant combining the good-suffix
rule with the adapted Zhu--Takaoka bad-character rule, the best-performing BC-only. We benchmark all five against
\textsc{BM\_Indet}~\cite{Neerja2024}, \textsc{KMP\_Indet}~\cite{KMPIndet}, and brute
force. Figures~\ref{fig:gs-1} and~\ref{fig:gs-2} (log scale) show that every
\textsc{Fast\_BM\_Indet} variant is dramatically faster than \textsc{BM\_Indet}; on
long texts the gap widens with $n$, reaching two orders of magnitude (e.g., at
$n=10^6$ on $\sigma=4$, $2{,}333$\,ms for \textsc{BM\_Indet} versus $7$\,ms for the
Zhu--Takaoka hybrid). The hybrids also beat \textsc{KMP\_Indet} on small alphabets
and short texts, by about $5\times$ on DNA at $n=10^6$; only at $\sigma=20$ does
\textsc{KMP\_Indet} catch up on long texts. Among Rules~I--IV, Rule~I is clearly
best; Rules~II--IV are slower by $20$--$90\%$ because they scan the matched suffix
or every letter of an indeterminate cell to compute shifts that are rarely larger.
Rule~I is within $10\%$ of Zhu--Takaoka on small alphabets and is the fastest
hybrid at $\sigma=20$. It uses a single $O(\sigma)$ table and one $O(1)$ lookup per
alignment, whereas Zhu--Takaoka requires an $O(\sigma^2)$ pair table. Thus Rule~I
is preferable on large alphabets and Zhu--Takaoka on small ones, where the pair
table is cheap. Because most alignments fail immediately on random strings and on
\textit{E.\,coli}, brute force is a strong baseline: Zhu--Takaoka with
\textsc{Fast\_GSR\_Indet\_Shift} matches it on $\sigma=4$ and DNA, and is slower by
factors of $1.5$ at $\sigma=9$ and $4$ at $\sigma=20$. Our hybrids therefore bring
indeterminate Boyer--Moore matching to the speed of brute force on such inputs,
whereas \textsc{BM\_Indet} remains far slower, though they do not surpass brute
force there.

\bibliographystyle{splncs04}
\bibliography{sample}

@book{Gusfield1997,
    author    = {Gusfield, Dan},
    title     = {Algorithms on Strings, Trees, and Sequences: Computer Science
                 and Computational Biology},
    publisher = {Cambridge University Press},
    address   = {Cambridge, UK},
    year      = {1997}
}

@book{Navarro2002,
    author    = {Navarro, Gonzalo and Raffinot, Mathieu},
    title     = {Flexible Pattern Matching in Strings: Practical On-Line Search
                 Algorithms for Texts and Biological Sequences},
    publisher = {Cambridge University Press},
    address   = {Cambridge, UK},
    year      = {2002}
}

@article{BaezaYates1992,
    author  = {Baeza-Yates, Ricardo and Gonnet, Gaston H.},
    title   = {A new approach to text searching},
    journal = {Communications of the ACM},
    volume  = {35},
    number  = {10},
    pages   = {74--82},
    year    = {1992}
}

@incollection{FischerPaterson1974,
    author    = {Fischer, Michael J. and Paterson, Michael S.},
    title     = {String matching and other products},
    booktitle = {Complexity of Computation},
    publisher = {American Mathematical Society},
    pages     = {113--125},
    year      = {1974}
}

@article{Abrahamson1987,
    author  = {Abrahamson, Karl},
    title   = {Generalized string matching},
    journal = {SIAM Journal on Computing},
    volume  = {16},
    number  = {6},
    pages   = {1039--1051},
    year    = {1987}
}

@inproceedings{Iliopoulos2016,
    author    = {Iliopoulos, Costas S. and Radoszewski, Jakub},
    title     = {Truly subquadratic-time extension queries and periodicity
                 detection in strings with uncertainties},
    booktitle = {Proceedings of the 27th Annual Symposium on Combinatorial
                 Pattern Matching (CPM)},
    series    = {LIPIcs},
    volume    = {54},
    pages     = {8:1--8:12},
    year      = {2016}
}

@article{FastSearch2005,
    author  = {Cantone, Domenico and Faro, Simone},
    title   = {Fast-Search: A new efficient variant of the {Boyer-Moore}
               string search algorithm},
    journal = {Proceedings in Informatics},
    volume  = {14},
    pages   = {47--58},
    year    = {2003}
}

@inproceedings{WuManber1994,
    author    = {Wu, Sun and Manber, Udi},
    title     = {A fast algorithm for multi-pattern searching},
    institution = {Department of Computer Science, University of Arizona},
    number    = {TR-94-17},
    year      = {1994}
}

@article{ReverseFactor1994,
    author  = {Crochemore, Maxime and Czumaj, Artur and Gasieniec, Leszek
               and Lecroq, Thierry and Plandowski, Wojciech and Rytter,
               Wojciech},
    title   = {Sublinear-time string matching and related problems},
    journal = {SIAM Journal on Computing},
    volume  = {22},
    number  = {4},
    pages   = {792--807},
    year    = {1993}
}

@inproceedings{TurboReverseFactor1994,
    author    = {Lecroq, Thierry},
    title     = {Experimental results on string matching algorithms},
    journal   = {Software: Practice and Experience},
    volume    = {25},
    number    = {7},
    pages     = {727--765},
    year      = {1995}
}

@article{Knuth1977,
    author  = {Knuth, Donald E. and Morris, James H. and Pratt, Vaughan R.},
    title   = {Fast pattern matching in strings},
    journal = {SIAM Journal on Computing},
    volume  = {6},
    number  = {2},
    pages   = {323--350},
    year    = {1977}
}

@article{Boyer1977,
    author  = {Boyer, Robert S. and Moore, J. Strother},
    title   = {A fast string searching algorithm},
    journal = {Communications of the ACM},
    volume  = {20},
    number  = {10},
    pages   = {762--772},
    year    = {1977}
}

@article{TU93,
  author  = {Tarhio, Jorma and Ukkonen, Esko},
  title   = {Approximate {Boyer--Moore} String Matching},
  journal = {SIAM Journal on Computing},
  volume  = {22},
  number  = {2},
  pages   = {243--260},
  year    = {1993},
  
}

@article{btt2013,
  author    = {Baker, Brenda S. and Thierry, Emilie and Tischler, Germ{\'a}n},
  title     = {Pattern matching with indeterminate strings: Algorithms and data structures},
  journal   = {Theoretical Computer Science},
  volume    = {525},
  pages     = {28--42},
  year      = {2014},
 
}

@article{Horspool1980,
    author  = {Horspool, R. Nigel},
    title   = {Practical fast searching in strings},
    journal = {Software: Practice and Experience},
    volume  = {10},
    number  = {6},
    pages   = {501--506},
    year    = {1980}
}

@article{Sunday1990,
    author  = {Sunday, Daniel M.},
    title   = {A very fast substring search algorithm},
    journal = {Communications of the ACM},
    volume  = {33},
    number  = {8},
    pages   = {132--142},
    year    = {1990}
}

@article{Raita1992,
    author  = {Raita, Timo},
    title   = {Tuning the {Boyer-Moore-Horspool} string searching algorithm},
    journal = {Software: Practice and Experience},
    volume  = {22},
    number  = {10},
    pages   = {879--884},
    year    = {1992}
}

@article{Smith1991,
    author  = {Smith, Peter D.},
    title   = {Experiments with a very fast substring search algorithm},
    journal = {Software: Practice and Experience},
    volume  = {21},
    number  = {10},
    pages   = {1065--1074},
    year    = {1991}
}

@article{Hume1991,
    author  = {Hume, Andrew and Sunday, Daniel},
    title   = {Fast string searching},
    journal = {Software: Practice and Experience},
    volume  = {21},
    number  = {11},
    pages   = {1221--1248},
    year    = {1991}
}

@article{ZhuTakaoka1987,
    author  = {Zhu, Rui Feng and Takaoka, Tadao},
    title   = {On improving the average case of the {Boyer-Moore} string
               matching algorithm},
    journal = {Journal of Information Processing},
    volume  = {10},
    number  = {3},
    pages   = {173--177},
    year    = {1987}
}

@inproceedings{BerryRavindran1999,
    author    = {Berry, Thomas and Ravindran, Somsubhra},
    title     = {A fast string matching algorithm and experimental results},
    booktitle = {Prague Stringology Club Workshop},
    pages     = {16--26},
    year      = {1999}
}

@article{Crochemore1992,
    author  = {Crochemore, Maxime and Czumaj, Artur and Gasieniec, Leszek
               and Jarominek, Stefan and Lecroq, Thierry and Plandowski,
               Wojciech and Rytter, Wojciech},
    title   = {Speeding up two string-matching algorithms},
    journal = {Algorithmica},
    volume  = {12},
    number  = {4--5},
    pages   = {247--267},
    year    = {1992}
}

@article{HolubSmythWang2008,
    author  = {Holub, Jan and Smyth, W. F. and Wang, Shu},
    title   = {Fast pattern-matching on indeterminate strings},
    journal = {Journal of Discrete Algorithms},
    volume  = {6},
    number  = {1},
    pages   = {37--50},
    year    = {2008}
}

@article{Crochemore2016,
    author  = {Crochemore, Maxime and Iliopoulos, Costas S. and Kundu, Ritu
               and Mohamed, Manal and Vayani, Fatima},
    title   = {Linear algorithm for conservative degenerate pattern matching},
    journal = {Engineering Applications of Artificial Intelligence},
    volume  = {51},
    pages   = {109--114},
    year    = {2016}
}

@article{Daykin2019,
    author  = {Daykin, Jacqueline W. and Groult, Richard and Guesnet, Yannick
               and Lecroq, Thierry and Lefebvre, Arnaud and L{\'e}onard, Martine
               and Mouchard, Laurent and Prieur-Gaston, {\'E}lise
               and Watson, Bruce},
    title   = {Efficient pattern matching in degenerate strings with the
               {Burrows-Wheeler} transform},
    journal = {Information Processing Letters},
    volume  = {147},
    pages   = {82--87},
    year    = {2019}
}

@article{Alzamel2020,
    author  = {Alzamel, Mai and Ayad, Lorraine A. K. and Bernardini, Giulia
               and Grossi, Roberto and Iliopoulos, Costas S. and Pisanti, Nadia
               and Pissis, Solon P. and Rosone, Giovanna},
    title   = {Comparing degenerate strings},
    journal = {Fundamenta Informaticae},
    volume  = {175},
    number  = {1--4},
    pages   = {41--58},
    year    = {2020}
}

@inproceedings{KMPIndet,
    author    = {Mhaskar, Neerja and Smyth, W. F.},
    title     = {Simple {KMP} pattern-matching on indeterminate strings},
    booktitle = {Proceedings of the Prague Stringology Conference (PSC)},
    pages     = {125--133},
    year      = {2020}
}

@article{Neerja2024,
    author  = {Dehghani, Hossein and Lecroq, Thierry and Mhaskar, Neerja
               and Smyth, W. F.},
    title   = {Practical {KMP/BM} style pattern-matching on indeterminate strings},
    journal = {Theoretical Computer Science},
    volume  = {1027},
    pages   = {64--79},
    year    = {2025},

}

@inproceedings{Gawrychowski2020,
    author    = {Gawrychowski, Pawe{\l} and Ghazawi, Samah and Landau, Gad M.},
    title     = {On indeterminate strings matching},
    booktitle = {Proceedings of the 31st Annual Symposium on Combinatorial
                 Pattern Matching (CPM 2020)},
    series    = {LIPIcs},
    volume    = {161},
    pages     = {14:1--14:14},
    year      = {2020},
   
}

@article{Gawrychowski2026,
    author  = {Gawrychowski, Pawe{\l} and Ghazawi, Samah and Landau, Gad M.},
    title   = {On the complexity of indeterminate strings matching},
    journal = {Theoretical Computer Science},
    volume  = {1067},
    pages   = {115771},
    year    = {2026},
    
}

@INPROCEEDINGS{HMZ2026, 
 author = {Hamed Hasibi and Neerja Mhaskar and Tieyun Zhang},
 title = {Faster Indeterminate Pattern Matching Algorithm with Practical Implementations},
 booktitle = {Proceedings of the Prague Stringology Conference 2026},
 year = {2026}
}

@article{IUPAC1985,
    author  = {{IUPAC-IUB Commission on Biochemical Nomenclature}},
    title   = {Nomenclature for incompletely specified bases in nucleic acid
               sequences},
    journal = {European Journal of Biochemistry},
    volume  = {150},
    pages   = {1--5},
    year    = {1985}
}

@article{Stormo2000,
    author  = {Stormo, Gary D.},
    title   = {{DNA} binding sites: representation and discovery},
    journal = {Bioinformatics},
    volume  = {16},
    number  = {1},
    pages   = {16--23},
    year    = {2000}
}

\clearpage
\appendix
\section{Appendix}
\label{Appendix}
\subsection{Boyer Algorithm}
For completeness, we provide the bad character rule and good suffix rule computation details of the  Boyer--Moore algorithm~\cite{Boyer1977} below:

\itbf{Bad Character (BC) Rule.}  To compute the shift for this rule, a bad-character table $\tau : \Sigma \rightarrow \{0,1,\ldots,m\}$ is
defined as
\begin{equation}\label{eq:bm-bc-tau}
\tau(c) =
\begin{cases}
\max \{\, i \mid 1 \le i \le m \text{ and } p[i] = c \,\}, & \text{if } c \in p, \\[6pt]
0, & \text{if } c \notin p.
\end{cases}
\end{equation}

 If a mismatch occurs at position $j$ of the pattern against text character
$c$, then $c$ is termed as the \textit{lookup character}, and the bad-character (BC) shift
  is given by $\max(1,\, j - \tau[c])$.

\itbf{Good Suffix (GS) Rule.}
Suppose $p[j+1..m]$ is the longest matched suffix of length $m-j$ at an alignment $i$. We define two tables $L$ and $\ell$ w.r.t $p$ as:
\[
L[j+1] =
\begin{cases}
\max\{\, k \mid 1 \leq k \leq j
\text{ and } p[k..m-j+k-1] = p[j+1..m] \,\}, & \text{if such } k
\text{ exists,} \\[6pt]
0, & \text{otherwise.}
\end{cases}
\]

At an index position $j+1$, $L[j+1]$ stores the starting position $k$ of the rightmost occurrence of the longest matched suffix $p[j+1..m]$ in the unmatched prefix $p[1..j]$ of the pattern. The other table, $\ell$, is defined as follows:
\[
\ell[j+1] =
\begin{cases}
\max\{\, r \mid 1 \leq r \leq m-j
\text{ and } p[1..r] = p[m-r+1..m] \,\}, & \text{if such } r
\text{ exists,} \\[6pt]
0, & \text{otherwise.}
\end{cases}
\]
The table $\ell$, at index position $j+1$, stores the length of the longest prefix of $p$ that is also a suffix of $p[j+1..m]$.
The Good Suffix (GS) shift is then computed by,

\begin{equation}\label{eq:bm-gs-shift1}
\text{GS shift} =
\begin{cases}
j+1 - L[j+1], & \text{if } L[j+1] > 0, \\[6pt]
m - \ell[j+1], & \text{otherwise.}
\end{cases}
\end{equation} 

\subsection{Summary of classical Boyer--Moore Variants}
\begin{table}[h!]
\centering
\caption{Taxonomy of classical Boyer--Moore variants.}
\label{tab:taxonomy}
\footnotesize
\setlength{\tabcolsep}{3pt}
\resizebox{\linewidth}{!}{%
\begin{tabular}{p{1.2cm}p{2.3cm}p{4.5cm}p{5cm}}
\toprule
 & \textbf{BM Variant}
 & \textbf{Look up character in $t$}
 & \textbf{When it helps?} \\
\midrule
\multirow{5}{=}{\raggedright BC-Only\\single-char}
 & Horspool~\cite{Horspool1980}
 & Last text char in window -- $t[i{+}m{-}1]$
 & General purpose; simple baseline; widely used \\[4pt]
 & Sunday~\cite{Sunday1990}
 & First text char past window -- $t[i{+}m]$
 & Large alphabets; often yields larger shifts than Horspool \\[4pt]
 & Raita~\cite{Raita1992}
 & Last, first and middle text chars of window $t[i+m-1]$, $t[i]$, and $t[ i+\lfloor\frac{(m)}{2}\rfloor]$
 & Patterns with distinctive endpoint letters; rejects many windows early \\[4pt]
 & Smith~\cite{Smith1991}
 & Last text char in window and first text char after window $t[i+m-1]$ and $t[i+m]$
 & Often improves shifts vs.\ Horspool/Sunday with small overhead \\[4pt]
 & Tuned BM~\cite{Hume1991}
 & Last text char in window $t[i{+}m{-}1]$
 & Scans ahead to find a promising window before doing full comparison \\
\midrule
\multirow{2}{=}{\raggedright BC-Only\\char-pair}
 & Zhu--Takaoka~\cite{ZhuTakaoka1987}
 & Last two text chars in window $(t[i{+}m{-}2],\,t[i{+}m{-}1])$
 & Reduces minimal shifts; effective on large alphabets     \\[4pt]
 & Berry--Ravindran~\cite{BerryRavindran1999}
 & First two text chars after window $(t[i{+}m],\,t[i{+}m{+}1])$
 & Large alphabets; often larger shifts; extra preprocessing/space\\
\midrule
BC+GS
 & Boyer--Moore~\cite{Boyer1977}
 & Mismatching text char $t[i{+}j{-}1]$ + matched text suffix
 & General purpose; combines BC and GS for sublinear expected time (random-text model) \\
\bottomrule
\end{tabular}%
}
\end{table}
Table~\ref{tab:taxonomy} summarizes the classical Boyer--Moore variants that we adapt to indeterminate strings, grouped by how they compute a shift. The bad-character-only variants differ only in which text letters they inspect after an alignment. The single-character variants use one letter, the last letter of the window (Horspool, Raita, Tuned BM) or the first letter past it (Sunday), while Raita and Tuned BM additionally use cheap tests on the window before scanning it, and Smith takes the larger of Horspool's and Sunday's shifts. The two character-pair variants look up two adjacent letters at once, the last two letters of the window (Zhu--Takaoka) or the two letters past it (Berry--Ravindran), which gives larger shifts at the cost of a table of size $\sigma^2$. The full Boyer--Moore algorithm combines a bad-character rule with the good-suffix rule and uses the matched suffix as well as the mismatching letter.

\subsection{\textsc{Fast\_BM\_Indet} Algorithm}
For completeness, Algorithm~\ref{alg:fastbmindet} reproduces the \textsc{BM\_Indet}
algorithm of Dehghani et al.~\cite{Neerja2024}, modified only as required by
our new good-suffix rule and new bad-character rule used. Three changes are made. First, during the
right-to-left scan of each window we record $j_1$ and $j_2$, the first and
last solid positions of the matched suffix, and pass them to \textsc{Fast\_GSR\_Indet\_Shift}
(Algorithm~\ref{alg:fast-bmindet-shift}), which recovers the corresponding
solid letters $c_1 = T[i{+}j_1{-}1]$ and $c_2 = T[i{+}j_2{-}1]$ internally.
Second, we replace the \textsc{indet\_gsr\_shift} routine of~\cite{Neerja2024},
which rebuilds a prefix array at every indeterminate alignment, with a call to
\textsc{Fast\_\allowbreak GSR\_\allowbreak Indet\_\allowbreak Shift}. Third we replace the \textsc{bad\_ \allowbreak character\_ \allowbreak rule\_ \allowbreak shift} of \textsc{BM\_Indet} by one of the new BC Rules I-IV (here by BC Rule I). Every other step is unchanged from~\cite{Neerja2024}. Recording
$j_1$ and $j_2$ costs only $O(1)$ per position, so it does not add to the
algorithm's complexity; pairing each of BC~Rules~I--IV with this main
procedure gives the \textsc{Fast\_BM\_Indet} variants analyzed in
Section~\ref{sec:main-results}.

\begin{algorithm}[H]
	\caption{\textsc{Fast\_BM\_Indet} Variant with BC Rule I}\label{alg:fastbmindet}
	\begin{algorithmic}[1]
		\Function{\textsc{Fast\_BM\_Indet}}{$T,n,P,m$}: Integer List
            \State $i \gets 1;\ shift \gets 1;\ mismatched \gets \textsc{false}$
            \State $regular \gets$ ($P$ contains no indeterminate letter)
            \State $\tau' \gets \textsc{preprocess\_bad\_character}(P,m)$
                   \Comment{position-indexed table for the new good-suffix rule}
            \If {$regular$} compute the good-suffix tables $L$ and $\ell$ of $P$ \EndIf
            \State $indexlist \gets \emptyset$ \Comment{indices where $P$ occurs in $T$}
            \While {$i < n-m+1$}
                \State $shift \gets 1;\ mismatched \gets \textsc{false};\ indet_T \gets \textsc{false}$
                \State $j_1 \gets 0;\ j_2 \gets 0$
                       \Comment{first/last solid positions of the matched suffix}
                \For {$j \gets m$ \textbf{downto} $1$}
                    \If {$P[j] \not\approx T[i + j - 1]$}
                        \State $skip\_bc \gets \textsc{BC-RuleI-shift}$
                        \If {\textbf{not} $regular$ \textbf{ or } $indet_T$}
                            \State $skip\_gs \gets \textsc{Fast\_GSR\_Indet\_Shift}(\tau', j_1, j_2)$
                        \Else
                            \State $skip\_gs \gets \textsc{good\_suffix\_rule\_shift}$
                        \EndIf
                        \State $shift \gets \textsc{max}(shift, skip\_bc, skip\_gs)$
                        \State $mismatched \gets \textsc{true}$
                        \State \textbf{break}
                    \EndIf
                    \If {\textsc{Indet}($T[i + j - 1]$)}
                        \State $indet_T \gets \textsc{true}$
                    \Else
                        \Comment{$T[i+j-1]$ is a solid edge letter}
                        \If {$j_2 = 0$}
                            \State $j_2 \gets j$
                        \EndIf
                        \State $j_1 \gets j$
                    \EndIf
                \EndFor
                \If {\textbf{not} $mismatched$}
                    \State $indexlist \gets indexlist \cup \{i\}$
                    \If {\textbf{not} $regular$ \textbf{ or } $indet_T$}
                        \State $skip\_gs \gets \textsc{Fast\_GSR\_Indet\_Shift}(\tau', j_1, j_2)$
                    \Else
                        \State $skip\_gs \gets \textsc{good\_suffix\_rule\_shift}$
                    \EndIf
                    \State $shift \gets \textsc{max}(shift, skip\_gs)$
                \EndIf
                \State $i \gets i + shift$
            \EndWhile
            \State \Return $indexlist$
        \EndFunction
	\end{algorithmic}
\end{algorithm}

\end{document}